\newtheorem{theorem}{Theorem}
\newtheorem{proposition}[theorem]{Proposition}
\newtheorem{definition}[theorem]{Definition}
\newtheorem{remark}[theorem]{Remark}
\newtheorem{example}[theorem]{Example}
\newcommand{\dom}{\mbox{dom}}
\title{Proof Generation from Delta-Decisions}
\author{Sicun Gao, Soonho Kong, and Edmund M. Clarke\\
{\em Carnegie Mellon University, Pittsburgh, PA, USA 15213}}
\begin{document}
\maketitle
\begin{abstract}
We show how to generate and validate logical proofs of unsatisfiability from delta-complete decision procedures that rely on error-prone numerical algorithms. Solving this problem is important for ensuring correctness of the decision procedures. At the same time, it is a new approach for automated theorem proving over real numbers. We design a first-order calculus, and transform the computational steps of constraint solving into logic proofs, which are then validated using proof-checking algorithms. As an application, we demonstrate how proofs generated from our solver can establish many nonlinear lemmas in the the formal proof of the Kepler Conjecture.
\end{abstract}

\section{Introduction}

Decision solvers for logic formulas over the real numbers play a crucial role in the formal verification of safety-critical embedded systems. For full reliability, decision solvers should provide, besides ``{\sf sat}/{\sf unsat}" answers, certificates of correctness for such answers. For {\sf sat} answers, we can certify by just plugging in a solution of the formula (value assignments for all variables). For {\sf unsat} answers, there is no such witness, and we need mathematical {\em proofs of unsatisfiability} to guarantee correctness. Such proofs are especially important in the framework of $\delta$-complete decision procedures~\cite{DBLP:conf/cade/GaoAC12}, which rely on numerical procedures that are potentially error-prone. For instance, the following is an actual bug we experienced in building our SMT solver dReal~\cite{bug}:
\begin{quote}
With the standard C library implementation {\em eglibc-2.15}, included in the latest {\em Ubuntu 12.10}, the exponential and trigonometric functions contain serious errors. For instance, in upward rounding mode, $\sin(-2.437592) > 10^{53}$. Clearly, this leads to bugs in all constraint solvers using this standard C library.
\end{quote}
Note that when we obtain a proof of unsatisfiability, then the correctness of the result becomes independent from the numerical procedures that were used to obtain them.

Besides certifying correctness of solvers, obtaining such proofs is also important from the perspective of automated theorem proving. Decision solvers can establish mathematical theorems by solving satisfiability of the negation of a theorem, and establish correctness through the absence of counterexamples. Valid proofs of unsatisfiability can be directly used as formal proofs for the theorems. As an approach to automated theorem proving over the real numbers, the scalability can outperform existing symbolic approaches. For instance, Tom Hales' Flyspeck project~\cite{DBLP:conf/dagstuhl/Hales05,DBLP:journals/dcg/HalesHMNOZ10} for the formalization of his proof the Kepler conjecture, requires proving hundreds of nonlinear real inequalities. We will demonstrate that we can automatically generate proofs for many of such formulas. 

It is worth pointing out that after proof generation, proof checking is still a nontrivial problem because of the use of numerical procedures in the computation. Indeed, not all of the {\sf unsat} answers that we have obtained can be proof checked. The challenge lies in validating basic axioms about nonlinear functions over the reals, which can be easily established by numerical algorithms (such as Newton iteration), but not symbolically. Ideally, we need to formalize most of the numerical algorithm in a $\delta$-complete decision procedure to achieve full validation. We regard this as an interesting direction towards bridging the gap between numerical and symbolic methods in  solving formulas over the real numbers. 

We will describe our approach in the following steps:

1. We formalize the ICP algorithm in the framework of Abstract DPLL~\cite{DBLP:journals/jacm/NieuwenhuisOT06}. The similarity between ICP and SAT solving techniques has been explored in existing work~\cite{HySAT}. With this formulation, the branch-and-prune framework is viewed as a transition system with a small set of transition rules. (Section~\ref{formalization})

2. We use a simple first-order proof calculus $\mathbb{D}_A$, relativized to a set $A$ of axioms over the reals, and show how to transform a run of the Abstract ICP to a proof in the system. (Section~\ref{icp})

3. We show how to validate the generated proofs using a stand-alone proof checker implementing simple rules and reliable interval arithmetic. The proof checker interacts with the solver in an abstraction refinement loop to obtain proof trees of sufficient detail (Section~\ref{validate}). In the end, we show experimental results towards the proving nonlinear lemmas in the Flyspeck project, in Section~\ref{kepler}.

\paragraph{Related Work.} Our work is closely related to several lines
of research in the existing literature. For proving formulas with
transcendental functions,
MetiTarski~\cite{DBLP:conf/itp/Paulson12,DBLP:journals/jar/AkbarpourP10,DBLP:conf/aisc/PassmorePM12}
is the leading tool that reduces problems to
polynomials and calls quantifier elimination procedures. Note that in MetiTarski, the polynomial problems are solved using external tools, without producing proofs. For problems with only
polynomials, Bernstein polynomials are used in PVS for formal
proofs~\cite{MN12}. Our approach aims to automatically produce complete formal proofs for formulas with transcendental functions. The iSAT solver~\cite{HySAT} also contains strategies for certifying
their answers in a different framework~\cite{DBLP:conf/ddecs/KupferschmidBTF11}.
There are now several SMT
solvers~\cite{DBLP:conf/cade/JovanovicM12,DBLP:conf/mkm/PassmoreJ09} for formulas with nonlinear polynomials over
the reals based on CAD with no proof-producing capacities, but a proof-producing
algorithm is possible, as sketched in~\cite{DBLP:conf/cade/McLaughlinH05}. Proofs for
correctness in general SMT solvers have been well studied, for instance in~\cite{DBLP:journals/fmsd/StumpORHT13}, which allows us to focus on the nonlinear theory solver in our framework.

\section{A Formalization of Interval Constraint Propagation}\label{formalization}

Interval Constraint Propagation (ICP)~\cite{handbookICP} finds
solutions of real constraints using the ``branch-and-prune'' method, combining
interval arithmetic and constraint propagation. The idea is to use interval
extensions of functions to ``prune'' out sets of points that are not in the
solution set and ``branch'' on intervals when
such pruning can not be done, recursively until a small enough box
that may contain a solution is found or inconsistency is observed. A high-level description of the decision version of ICP is given in Algorithm~\ref{icpalgo}~\cite{handbookICP,DBLP:conf/cade/GaoAC12}.
\begin{algorithm}\label{algo1}
\caption{ICP($f_1,...,f_m, B_0 = I_1^0\times\cdots\times I_n^0, \delta$)}\label{icpalgo}
\begin{algorithmic}[1]
\Statex
    \State $S \gets B_0$
    \While{$S \neq \emptyset$}
        \State $B \gets S.\mathrm{pop}()$
        \While{$\exists 1 \leq i \leq m, B \neq_{\delta} \mathrm{Prune}(B,f_i)$}
        \State $B \gets \mathrm{Prune}(B, f_i)$
        \EndWhile
        \If{$B\neq \emptyset$}
            \If{$\exists 1\leq i\leq m, |\sharp f_i(B)|\geq \delta$}
                \State $\{B_1,B_2\} \gets \mathrm{Branch}(B, i)$
                \State $S.\mathrm{push}(\{B_1,B_2\})$
            \Else
                \State \Return {\sf sat}
            \EndIf
        \EndIf
    \EndWhile
    \State \Return {\sf unsat}
\end{algorithmic}
\end{algorithm}

Our task now is to formalize ICP algorithms so that we can extract symbolic proofs from its computation processes. Similar to Abstract DPLL, we represent ICP
as a transition system, whose states consist of interval assignments and the
real constraints to be solved. An interval $I$ is any connected subset of
$\mathbb{R}$ and we write $\mathbb{IR}$ to denote the set of all the intervals.
We first formalize how ICP maintains interval assignments to a set of variables
as follows:
\begin{definition}[Interval Assignment Sequence]
Let $x_1,...,x_n$ be real variables. An {\em interval assignment sequence} over
$\vec x$ is a sequence $(s_1,...,s_m)$, where
\begin{multline*}
s_i\in \{(x_i\in I_j): 1\leq i\leq n, I_j\in
\mathbb{IR}\}\\
\cup\{(x_i\in I_j)^d: 1\leq i\leq n, I_j\in
\mathbb{IR}\}.
\end{multline*}
We write $(S_1, S_2)$ to denote the concatenation of two sequences $S_1$ and
$S_2$. The parentheses can be omitted when appropriate.
\end{definition}
It will be clear later that when we write $(x\in I)^d$, it means an arbitrary
choice on the value of $x$ (called a d-assignment), which is consequently a backtrack point.
\begin{remark}
ICP can maintain unions of intervals for variables. In principle this is not
needed if we only consider the decision problem, which only searches for
one solution and the components of a union can be tested sequentially. So we
assume that only connected subsets of values are used here.
\end{remark}

\begin{definition}[Box Domain]
Let $S$ be an interval assignment sequence over variables $x_1,...,x_n$. The
{\em box
domain} associated with $S$ is defined by
$$\beta(S) = I_1\times\cdots \times I_n,$$ 
where $I_i = \bigcap\{ I:
(x_i\in
I)\mbox{ or } (x_i\in I)^d \mbox{ occurs in } S\}.$
Also, we write $\beta(S)_i$ to denote $I_i$.
\end{definition}

\begin{definition}[ICP Transitions]~\label{transitions} Let $\vec x =
(x_1,...,x_n)$ be a vector of real variables. We write $c(\vec x)$ to denote a constraint over $\mathbb{R}^n$, and
$S$ an interval assignment sequence over $\vec x$. Let $S\parallel c$ be the
current state. We will always write $\beta(S_i) = I_i$ to denote the current
interval assignment on variable $x_i$. We now define the following transition
rules from $S\parallel c$ to another state.
\paragraph{(Pruning)} Let $I_i^1$ be a subset of $I_i$ such that $\forall \vec
a\in \beta(S,x_i\in I_i^1)$, $c(\vec a)$ is false. Then, if we let $I_i^2$ be an
interval satisfying $I_i\subseteq I_i^1 \cup I_i^2$, then
\begin{eqnarray*}
S\parallel c &\stackrel{p}{\Longrightarrow}& S, (x_i\in I_i^2)\parallel c
\end{eqnarray*}
is called a pruning step.
\paragraph{(Branching)}Let $I_i^1$ be a subset of $I_i$. Then
\begin{eqnarray*}
S\parallel c &\stackrel{br}{\Longrightarrow}& S, (x_i\in I_i^1)^d \parallel c,
\end{eqnarray*}
is called a branching step.
   
\paragraph{(Backtracking)} Let $I_i^1$ be a subset of $I_i$, such that $\forall
\vec a\in \beta(S,x_i\in I_i^1, S')$, $c(\vec a)$ is false. Let $I_i^2$ be an
interval such that $I\subseteq I_i^1\cup I_i^2$. If in addition, $S'$ does not
contain any $d$-assignment (of the form $(x\in I)^d$), then we can make
a transition
\begin{eqnarray*}
S, (x_i\in I_i^1)^d, S'\parallel c \stackrel{bt}{\Longrightarrow}& S, (x_i\in
 I_i^2) \parallel c,
\end{eqnarray*} 
which is called a backtracking step.

\paragraph{(Failure)} Suppose $\forall \vec a\in \beta(S)$, $c(\vec a)$ is
false, and there is no $d$-assignment in $S$. Then we can make the transition
\begin{eqnarray*}
S\parallel c \stackrel{f}{\Longrightarrow} \emptyset\parallel c
\end{eqnarray*}
which is called a failure step.
\end{definition}
\begin{definition}[Abstract ICP]
An $n$-dimensional ICP framework is a transition system
$$\langle \mathbb{IR}^n, \mathcal{S}, \mathcal{C}, \Longrightarrow,
\varepsilon\rangle$$
where $\mathcal{S}$ is the set of all interval assignment sequences over
$\mathbb{IR}^n$, and $\mathcal{C}$ is any set of constraints over
$\mathbb{R}^n$. A state is an element in $\mathcal{S}\parallel \mathcal{C}$. The
transition rules $\Longrightarrow: \mathcal{S}\times \mathcal{C}\rightarrow
\mathcal{S} \times \mathcal{C}$ are as specified in
Definition~\ref{transitions}. $\varepsilon\in \mathbb{Q}^+$ is an
error bound. A {\em run} of ICP is any sequence
$$S_1\parallel c, ... , S_k\parallel c,$$
where either $S_k$ is $\emptyset$, or $S_k\neq \emptyset$ and
$||\beta(S_k)||<\varepsilon$.
\end{definition}
\begin{remark}
We have defined ICP in a general way, without enforcing conditions
on the pruning operators, such as well-definedness. Thus, many
invalid ICP runs can be generated. In this way, we treat ICP as a proof
searching algorithm, and rely on the proof checkers to determine the correctness
of an ICP run. In practice, of course, only ``correct" ICP algorithms can
provide proofs that can always be validated.
\end{remark}
\begin{example}~\label{icp-example}
Consider a constraint $c(x,y) : y=x \wedge y = x^2$, and $x\in [1.5,2]$ and
$y\in [1,2]$ are the initial interval assignment. A possible ICP run is:
\begin{eqnarray*}
& &x\in [1.5,2], y\in [1,2]\parallel c \\
&\stackrel{br}{\Longrightarrow}& x\in [1.5,2],
y\in [1,2], (x\in [1.7, 2])^d\parallel c\\
&\stackrel{bt}{\Longrightarrow}& x\in [1.5,2], y\in [1,2], x\in [1.5, 1.7]\parallel c\\
& &\hspace{-.5cm} \mbox{  (backtracking, since $\forall\vec a\in [1.7,2]\times [1,2]$,
$c(\vec a)$ is false,}\\
& & \ \ \mbox{and $[1.5,2]\subseteq[1.5,1.7]\cup [1.5, 2]$ for
$x$)}\\
&\stackrel{p}{\Longrightarrow}& x\in [1.5,2], y\in [1,2], x\in [1.5, 1.7], x\in
[1.5, 1.6]\parallel c\\
& & \mbox{  (pruning, since $\forall \vec a\in[1.6,1.7]\times [1, 2]$, $c(\vec
a)$ is false)}\\
&\stackrel{p}{\Longrightarrow}& x\in [1.5,2], y\in [1,2], x\in [1.5, 1.7],\\
& & \hspace{4cm}x\in [1.5, 1.6], x\in \emptyset\parallel c\\
& & \mbox{  (pruning, since $\forall \vec a\in[1.5,1.6]\times [1, 2]$, $c(\vec
a)$
is false)}\\
&\stackrel{f}{\Longrightarrow}& \emptyset||c\ \mbox{ (since
$\forall \vec a\in \emptyset \times [1, 2], c(\vec a)$ is false.)}
\end{eqnarray*}
\end{example}

\section{Extracting Proofs from ICP Runs}\label{icp}

\subsection{First-Order Proofs of Unsatisfiability}

We focus on the proof the unsatisfiability of conjunctions of theory
atoms in the DPLL(T) framework, i.e., formulas of the form
$$\exists^{I_1} x_1\cdots \exists^{I_n} x_n.\; \bigwedge_{i=1}^m
f_i(x_1,...,x_n)\sim 0$$
where $\sim \in \{=,\neq, >, \geq, <, \leq\}$. It is clear that once such
proofs are obtained, the proof of unsatisfiability of Boolean combinations of
the theory atoms can be obtained, by simply plugging them in the high level
resolution proof. Also, it is important to note that the ICP algorithm solves
{\em systems} of constraints, and it regards the conjunction $\bigwedge_{i=1}^m
f_i(x_1,...,x_n)\sim 0$ as one constraint $c(x_1,...,x_n)$. Consequently, our
task is now reduced to obtaining proofs for the validity of formulas of the form
$\forall x_1 \cdots \forall x_n.(x_1\in I_1\wedge \cdots \wedge x_n\in I_n
)\rightarrow \neg c(\vec x)$, from the failure of ICP
search for a solution to the original SMT formula $\exists \vec x. \vec x\in
\vec I\wedge c(\vec x)$.

We will construct a simple first-order proof calculus, and show how to
transform ICP runs into proofs in the system.

Again, we consider formulas in a signature $\mathcal{L}_F = \langle <,
\mathcal{F} \rangle$, where constants are considered as 0-ary functions in
$\mathcal{F}$. When we write $x\in I$, where $I$ denotes an
interval, it is regarded as an abbreviation for their
equivalent $\mathcal{L}_\mathcal{F}$-formula. Note that this means that $I$ only
uses $\mathcal{L}_{\mathcal{F}}$-terms as end-points. Also, as mentioned above,
$c(\vec x)$ abbreviates a conjunction of atomic formulas. We also allow the use
of vectors in the formulas, writing $\vec x\in \vec I$ to denote $\bigwedge_i
x_i\in I_i$.

\begin{definition}[System $\mathbb{D}_A$]
We define $\mathbb{D}_A$ to be the first-order proof system consisting of only the
following two rules:
  \begin{mathpar}
    \inferrule{
      \forall \vec x (\psi \rightarrow \varphi)
      \and
    \forall \vec x (\psi' \rightarrow \varphi)
    }
  {
  \forall \vec x ( \psi\vee \psi' \rightarrow \varphi)
  }{\ \ \vee I}\and
  \inferrule{
  \forall {\vec x} (\psi \rightarrow \varphi)
    \and
  \forall {\vec x} (\psi' \rightarrow \psi)
  }
  {
  \forall {\vec x} (\psi' \rightarrow \varphi)
  }{\ \ \forall\mbox{MP}}
  \end{mathpar}
and a set $A$ of axioms of the following two types:
\paragraph{Interval Axioms}
\begin{mathpar}
\inferrule{ }{\forall \vec x (\vec x\in \vec I \rightarrow \vec x\in \vec I_1
\vee \vec x\in \vec I_2 )}{\ \mbox{IA}}
\end{mathpar}
\paragraph{Constraint Axioms}
\begin{mathpar}
\inferrule{ }{\forall \vec x ( \vec x\in\vec I \rightarrow c(\vec x))}{\
\mbox{CA}}
\end{mathpar}
\end{definition}
Derivations in $\mathbb{D}_A$ are as standardly defined, as natural deductions
following these rules. Clearly, the two first-order rules are valid. Thus,
if all the axioms in $A$ are valid, then the system only produces valid
formulas over $\mathbb{R}$.
\begin{proposition}[Soundness] If $\mathbb{D}_A\vdash \varphi$ and
$\mathbb{R}\models \bigwedge A$, then $\mathbb{R}\models \varphi$.
\end{proposition}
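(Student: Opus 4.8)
The plan is to prove this by a routine induction on the structure of the natural-deduction derivation witnessing $\mathbb{D}_A \vdash \varphi$. Every such derivation is a finite tree whose leaves are axiom instances from $A$ (of type IA or CA) and whose internal nodes are applications of the two rules $\vee I$ and $\forall\mathrm{MP}$; hence it suffices to show that (i) each leaf is true in $\mathbb{R}$, and (ii) each of the two rules, applied to premises true in $\mathbb{R}$, yields a conclusion true in $\mathbb{R}$. Throughout I read the abbreviations $\vec x \in \vec I$ and $c(\vec x)$ as their defining $\mathcal{L}_{\mathcal{F}}$-formulas, so that Tarskian satisfaction over the reals applies directly; since a derivation is finite, only finitely many axioms are ever invoked, so the hypothesis $\mathbb{R} \models \bigwedge A$ is meaningful even when $A$ is infinite.

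For the base case, a leaf of the derivation is an instance of IA or of CA, hence a conjunct of $\bigwedge A$; by the hypothesis $\mathbb{R} \models \bigwedge A$ it is therefore true in $\mathbb{R}$.

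For the inductive step I treat the two rules separately. Suppose $\varphi$ is $\forall \vec x(\psi \vee \psi' \rightarrow \chi)$ obtained by $\vee I$ from $\forall \vec x(\psi \rightarrow \chi)$ and $\forall \vec x(\psi' \rightarrow \chi)$, both true in $\mathbb{R}$ by the induction hypothesis. Fix an arbitrary tuple $\vec a \in \mathbb{R}^n$ with $\mathbb{R} \models (\psi \vee \psi')[\vec a]$; then $\mathbb{R} \models \psi[\vec a]$ or $\mathbb{R} \models \psi'[\vec a]$, and applying the corresponding premise gives $\mathbb{R} \models \chi[\vec a]$. Since $\vec a$ was arbitrary, $\mathbb{R} \models \varphi$. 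Suppose instead $\varphi$ is $\forall \vec x(\psi' \rightarrow \chi)$ obtained by $\forall\mathrm{MP}$ from $\forall \vec x(\psi \rightarrow \chi)$ and $\forall \vec x(\psi' \rightarrow \psi)$, both true in $\mathbb{R}$. Fix $\vec a$ with $\mathbb{R} \models \psi'[\vec a]$; the second premise yields $\mathbb{R} \models \psi[\vec a]$, and then the first yields $\mathbb{R} \models \chi[\vec a]$; hence $\mathbb{R} \models \varphi$. This closes the induction.

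I do not anticipate any genuine obstacle: the argument is the standard soundness bookkeeping for a two-rule calculus, and the text has already observed that $\vee I$ and $\forall\mathrm{MP}$ are valid, so the only real content is the trivial remark that the axioms of $A$ hold by hypothesis. The one point worth stating explicitly — not a difficulty so much as a sanity check — is that $\vec x \in \vec I$ must be unfolded to an honest $\mathcal{L}_{\mathcal{F}}$-formula whose interval endpoints are $\mathcal{L}_{\mathcal{F}}$-terms, which is precisely what makes the interval and constraint axioms first-order sentences to which the semantics over $\mathbb{R}$ applies.
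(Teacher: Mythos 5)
Your proof is correct and follows exactly the route the paper itself takes: the paper offers no formal proof beyond the remark that the two rules are valid and hence only valid formulas are produced from valid axioms, and your induction on derivations is just the standard elaboration of that observation. Nothing is missing, and your side remark about unfolding $\vec x\in\vec I$ into an $\mathcal{L}_{\mathcal{F}}$-formula matches the paper's own convention.
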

\begin{remark}
 Clearly, the constraint axioms are the most nontrivial part. They are the
basic facts of real functions that a numerical procedure relies on, usually
concerning the range of functions within a small interval. The interval axioms
are sometimes not trivial either (for instance, compare intervals ending with
$e^{\pi}$ and $\pi^e$ respectively).  Proof-checking involves validation of
these axioms, which we discuss in Section~\ref{validate}.
\end{remark}

We now describe the construction of proof trees from ICP runs, which will be
represented as labeled binary trees. A labeled binary tree is defined as a
tuple $T =
\langle V, V_L, \Sigma, \delta, \sigma\rangle$. Here, $V = \{v_0, ..., v_k\}$, is a finite set of nodes, where $v_0\in V$ always denotes the
root
node. $V_L$ is the set of leaf nodes in $V$. $\Sigma$ is a set of labels, which in our case is the set of
$\mathcal{L}_\mathcal{F}$-formulas. $\delta:\subseteq V\times \{l,r\}
\rightarrow V$ is a
partial mapping from a node to its descendant nodes, where $\delta(v, l)$ and
$\delta(v, r)$ denote the left and right descendant nodes, respectively.
$\sigma:\subseteq V\rightarrow \Sigma$ is a labeling function that maps each node
$v\in V$ to a formula $\sigma(v) \in \Sigma$. In addition, the edges in the
tree can be labeled as well, through a function $\tau: V\times V\rightarrow \Omega$
where $\Omega$ is a set of edge-labels.

\subsubsection{Tree Generation} Let an ICP run be
$$S_0\parallel c\stackrel{t_1}{\Longrightarrow}\cdots
\stackrel{t_m}{\Longrightarrow} S_m\parallel c,$$
such that the ending transition $t_m$ is a failure step, i.e., $S_m=\emptyset$.
We now define the procedure by defining the functions $\delta$ and
$V_L$ through induction on $s_i$. The edges can be labeled naturally with
$\Omega$ = \{{$\vee$I}, $\forall$M, IA, CA\}.

\paragraph{Case $i= 0$.} We label the root node $v_0$ by
$$\sigma(v_0) := \forall \vec x( \vec x\in \beta(S_0) \rightarrow \neg c).$$
Let $V_L^0= \{v_0\}$ denote the current
collection of leaf nodes. Note that this formula is the negation of the input
SMT formula.


\paragraph{Case $i = k+1$ ($1< k \leq m$). }
Suppose $V_L^k$ and $\sigma$ have been defined for $s_1, ...,s_k$. Write
$s_k = S_k
\parallel c$ and $s_{k+1} = S_{k+1} \parallel c$. Now we split
the cases on the type of the step $t$ from $s_k$ to $s_{k+1}$ as follows.
Again, we use the convention that $\beta(S)_i = I_i$ denotes the current
interval assignment on a variable $x_i$.
\paragraph{(Pruning Case)} Suppose $s_k\Longrightarrow s_{k+1}$ is a
pruning step. That is,
$$S_k\parallel c \stackrel{p}{\Longrightarrow} S_k, (x_i\in I_i^2)\parallel c,$$
where $I_i\subseteq I_i^1\cup I_i^2$ and $\forall \vec
a\in \beta(S_k,x_i\in I_i^1)$, $c(\vec a)$ is false. If we write
$$\vec I_1 = \beta(S_k, (x_i\in I_i^1)), \vec I_2 = \beta(S_k, (x_i\in
I_i^2)), \mbox{ and } \vec I= \beta(S_k),$$ then this step corresponds to
the sub-tree as shown in Fig. 1, Case A.  

\begin{figure*}[h!]
A. Pruning Case:
{\small
\begin{mathpar}
\inferrule{\inferrule{
  \inferrule{\vdots}{\forall \vec x (\vec x\in \vec I_2 \rightarrow \neg c)}
    \and
    \inferrule{\ }
    {
      \forall \vec x (\vec x \in \vec I_1 \rightarrow\neg c)
      }\mbox{CA}
   }
  {
  \forall x ( (\vec x\in \vec I_1\vee \vec x \in \vec I_2) \rightarrow \neg c)
  }\mbox{$\vee$I}
  \and
  {
  \inferrule{\ }{\forall x ( x\in I_i\rightarrow(x \in
I_i^1 \vee x\in I_i^2))}\mbox{IA}
  }
  }
{
\forall \vec x (\vec x\in\vec I \rightarrow \neg c)
}\mbox{$\forall$MP}
\end{mathpar}}
B. Branching Case:
{\small
\begin{mathpar}
\inferrule{\inferrule{
  \inferrule{\vdots}{\forall \vec x (\vec x\in \vec I_1 \rightarrow \neg c)}
    \and
    \inferrule{\vdots }
    {
      \forall \vec x (\vec x \in \vec I_2 \rightarrow\neg c)
      }
   }
  {
  \forall x (\vec x\in \vec I_1\vee \vec x \in \vec I_2 \rightarrow \neg c)
  }\mbox{$\vee$I}
  \and
  {
  \inferrule{\ }{\forall x ( x\in I_i\rightarrow(x \in
I_i^1 \vee x\in I_i^2))}\mbox{IA}
  }
  }
{
\forall \vec x (\vec x\in\vec I \rightarrow \neg c)
}\mbox{$\forall$MP}
   \end{mathpar}
}
C. Backtracking Case:
{\small
\begin{mathpar}
\inferrule{
  \inferrule{
    \inferrule*{
      \inferrule*[vdots=1.5em, Right=\mbox{CA}]{ }{
        \forall \vec x
        (\vec x\in \beta(S_{k'},(x\in I_i^1)^d,S')\rightarrow \neg c)
      }
    }
    {\forall \vec x (\vec x\in \vec I_1
      \rightarrow \neg c)}
    \and
    \inferrule{
      \vdots
      }
    {
      \forall \vec x (\vec x \in \vec I_2 \rightarrow\neg c)
    }
   }
  {
  \forall x (\vec x\in \vec I_1\vee \vec x \in \vec I_2 \rightarrow \neg c)
  }\mbox{$\vee$I}
  \and
  {
\vdots\ \ \
  }
  }
{
\forall \vec x (\vec x\in\vec I \rightarrow \neg c)
}\mbox{$\forall$MP}
   \end{mathpar}
}
\caption{Proof Trees}\label{trees}
\end{figure*}

Formally, the sub-tree is added as follows. Let $v\in V_L^k$ be an
existing
leaf node that is labeled by the formula corresponding to $S_k\parallel c$;
namely,
$$\sigma(v) = \forall \vec x (\vec x\in\vec I \rightarrow \neg c).$$ (We
will inductively prove that such a node exists.) We then define
\begin{eqnarray*}
\delta(v, l) &=& v_{k+1}^1, \sigma(v_{k+1}^1) = \forall \vec x
( (\vec x \in \vec I_1 \vee \vec x \in \vec I_2) \rightarrow \neg c); \\
\delta(v, r) &=& v_{k+1}^2, \sigma(v_{k+1}^2) = \forall \vec x ( \vec x\in \vec
I\rightarrow( \vec x \in \vec I_1 \vee \vec x\in \vec I_2));\\
 \delta(v_{k+1}^1, l) &=& v_{k+1}^3, \sigma(v_{k+1}^3) = \forall \vec x (\vec x
\in \vec I_2 \rightarrow \neg c)\\
 \delta(v_{k+1}^1, r) &=& v_{k+1}^4, \sigma(v_{k+1}^4) = \forall \vec x (\vec x
\in \vec I_1 \rightarrow\neg c)
  \end{eqnarray*}
and set $V_L^{k+1} = (V_L^k \setminus\{v\})\cup \{v_{k+1}^3\}$.

\paragraph{(Branching Case)}Suppose $s_k\Longrightarrow s_{k+1}$ is a
branching step. That is,
\begin{eqnarray*}
S_k\parallel c &\stackrel{br}{\Longrightarrow}& S_k, (x_i\in I_i^1)^d \parallel
c,
\end{eqnarray*}
under the condition that $I_i^1\subseteq I_i$. If we write
$$\vec I_1 = \beta(S, (x_i\in I_i^1)), \vec I_2 = \beta(S, (x_i\in
I_i^2)), \mbox{ and } \vec I= \beta(S),$$ where $I\subseteq I_i^1\cup I_2$, then
this step corresponds to the sub-tree as shown in Fig. 1, Case B. Formally it is defined as follows. Again, let $v\in V_L^k$
be a leaf node such that $\sigma(v) = \forall \vec x (\vec x\in\vec I
\rightarrow \neg c).$ We then define
\begin{eqnarray*}
\delta(v, l) &=& v_{k+1}^1, \sigma(v_{k+1}^1) = \forall \vec x
( \vec x \in \vec I_1 \vee \vec x \in \vec I_2 \rightarrow \neg c); \\
\delta(v, r) &=& v_{k+1}^2, \sigma(v_{k+1}^2) = \forall \vec x ( \vec x\in \vec
I\rightarrow(\vec x \in \vec I_1 \vee \vec x\in \vec I_2));\\
 \delta(v_{k+1}^1, l) &=& v_{k+1}^3, \sigma(v_{k+1}^3) = \forall \vec x (\vec
x\in
\vec I_1 \rightarrow \neg c)\\
 \delta(v_{k+1}^1, r) &=& v_{k+1}^4, \sigma(v_{k+1}^4) = \forall \vec x (\vec x
\in \vec I_2 \rightarrow\neg c)
  \end{eqnarray*}
and set $V_L^{k+1} = (V_L^k\setminus \{v\}) \cup \{v_{k+1}^3, v_{k+1}^4 \}$.

\paragraph{(Backtracking Case)}Suppose $s_k\Longrightarrow s_{k+1}$ is
a branching step. That is,
\begin{eqnarray*}
S_{k'} , (x_i\in I_i^1)^d, S'\parallel c &\stackrel{bt}{\Longrightarrow}& S_{k'},
(x_i\in I_i^2 ) \parallel c,
\end{eqnarray*}
when $\forall a\in \beta(S, (x_i\in I_i^1)^d, S')$, $c(\vec
a)$ is false, and $I_i\subseteq I_i^2\cup I_i^1$, where $I_i =
\beta(S_{k'})_i$. $S_{k'}$ is a previous interval assignment
sequence, with $k'<k$. If we write
$$\vec I_1 = \beta(S, (x_i\in I_i^1)), \vec I_2 = \beta(S, (x_i\in
I_i^2), \mbox{ and } \vec I= \beta(S_{k'}),$$ then this step corresponds to the sub-tree as shown in Fig. 1, Case C. Formally, we simply set $V_L^{k+1} = V_L^{k}$, and do not update the labeling.

\paragraph{(Fail Case)} Suppose it is a failure step. That is,
\begin{eqnarray*}
S\parallel c &\stackrel{f}{\Longrightarrow}& \emptyset \parallel c
\end{eqnarray*}
when $\forall \vec a\in \beta(S)$, $c(\vec a)$ is false and
 $S$ has no $d$-assignments. Let $\vec I
=\beta(S)$. This step corresponds to
{\small\begin{mathpar}
 \inferrule{\ }{\forall \vec x ( \vec x\in \vec I) \rightarrow \neg
c}\mbox{FA}
\end{mathpar}}We set $V_L^{k+1}=V_L^k\setminus\{v\}$ and do not update $\sigma$.

\paragraph{Complete tree.} In all, after all the steps in the ICP run are
followed, the tree that we construct is $ T = \langle V, V_L^m, \Sigma, \delta, \sigma
\rangle$. The axiom set is given by $$A = \{\sigma(v): v\in V_L^m\}.$$

It is easy to see that $T$ is a valid proof tree in $\mathbb{D}_A$:

\begin{proposition}\label{successful_tree}
For every ICP run ending with $\emptyset\parallel c$, the tree construction procedure above produces a valid natural deduction tree in $\mathbb{D}_A$. The size of the proofs is linear in the computation steps.
\end{proposition}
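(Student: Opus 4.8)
The proof will be a routine induction on the length $k$ of a prefix $S_0\parallel c\Longrightarrow\cdots\Longrightarrow S_k\parallel c$ of the given run; essentially all the content is in choosing a strong enough invariant. The invariant I plan to carry is: after the steps $t_1,\dots,t_k$ have been processed, (i) every non-leaf node of the partial tree is the conclusion of a correct instance of $\vee$I or $\forall$MP applied to its two children; (ii) every leaf that has already left the frontier is labeled by a formula that is a legitimate instance of one of the axiom schemas IA, CA, or FA --- and, by the side conditions in Definition~\ref{transitions}, a valid one --- so that it may be placed into $A$; (iii) if $S_k\neq\emptyset$, then the frontier $V_L^k$ contains a node $v$ with $\sigma(v)=\forall\vec x(\vec x\in\beta(S_k)\rightarrow\neg c)$; and (iv) for each $d$-assignment $(x_j\in J)^d$ still occurring in $S_k$, the frontier also contains a node labeled $\forall\vec x(\vec x\in\vec J'\rightarrow\neg c)$, where $\vec J'$ is the complementary ``sibling box'' recorded at the branching step that created that $d$-assignment. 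Clauses (iii) and (iv) are exactly what discharges the recurring side-remark ``we will inductively prove that such a node exists''. The base case $k=0$ is immediate: the tree is the single root, which is also the frontier, labeled with the negated input formula, and there are no $d$-assignments.

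For the inductive step I will case on the type of $t_{k+1}$. In the pruning and branching cases I take the frontier leaf $v$ supplied by (iii) for $\beta(S_k)$ and graft onto it the four-node gadget of Figure~\ref{trees} (Case~A for pruning, Case~B for branching). Here the side conditions of Definition~\ref{transitions} are precisely what make the gadget legal: because $\vec I,\vec I_1,\vec I_2$ agree in every coordinate but the $i$-th, the condition $I_i\subseteq I_i^1\cup I_i^2$ makes the right child $\forall\vec x(\vec x\in\vec I\rightarrow(\vec x\in\vec I_1\vee\vec x\in\vec I_2))$ a genuine IA instance; in the pruning case the hypothesis that $c$ is false throughout $\beta(S_k,x_i\in I_i^1)$ makes the leaf $\forall\vec x(\vec x\in\vec I_1\rightarrow\neg c)$ a valid CA-leaf; the apex node is an $\forall$MP instance and the intermediate node a $\vee$I instance, each of which is a one-line match against the schemas. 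Re-establishing (iii) and (iv) is then direct: after a pruning step the new frontier leaf $v_{k+1}^3$ is labeled with $\vec I_2=\beta(S_{k+1})$; after a branching step $v_{k+1}^3$ is labeled with $\vec I_1=\beta(S_{k+1})$ while the second new leaf $v_{k+1}^4$, labeled with $\vec I_2$, supplies precisely the sibling box that clause (iv) now requires for the freshly opened $d$-assignment. The failure step closes the frontier leaf for $\beta(S_k)$ with an FA instance; the absence of $d$-assignments in $S_k$ forces, via (iv), that this was the only remaining frontier node, so the resulting tree has an empty frontier. Since each transition adds at most four nodes, the final tree has at most $1+4m$ nodes, giving the linear size bound.

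The case I expect to carry the weight is backtracking, because there the procedure deliberately leaves the tree untouched and re-uses a leaf that may have been created many steps earlier. When $t_{k+1}$ rewrites $S_{k'},(x_i\in I_i^1)^d,S'\parallel c$ to $S_{k'},(x_i\in I_i^2)\parallel c$, I must argue that the node introduced as $v^4$ at the matching branching step (the one that first appended $(x_i\in I_i^1)^d$) is still in the frontier and still carries the label $\forall\vec x(\vec x\in\beta(S_{k'},x_i\in I_i^2)\rightarrow\neg c)$, so that it becomes the clause-(iii) witness for the new state $S_{k+1}$, whose box is $\beta(S_{k'},x_i\in I_i^2)$. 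Two points need care. First, the branching and backtracking steps must be speaking of the same complement interval $I_i^2$; I will read the branching transition as committing to the particular $I_i^2$ that its gadget records, and note that $\beta(S_{k'})_i$ is frozen between the two steps since the prefix $S_{k'}$ is not modified in between. Second --- and this is the real obstacle --- I must show that none of the intervening transitions $t_{k'+2},\dots,t_k$ removed $v^4$ from the frontier or relabeled it. This is a stack-discipline argument driven by (iv) and the shape of the rules: every intervening transition acts on a frontier leaf whose box refines $\beta(S_{k'},x_i\in I_i^1)$, or on a sibling leaf spawned strictly inside that subtree, hence never on $v^4$; and any $d$-assignment opened inside $S'$ must be discharged by its own backtrack before $t_{k+1}$ can fire, which is exactly the force of the side condition that $S'$ contain no $d$-assignment at backtrack time. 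Consequently $v^4$ survives intact; promoting it from clause-(iv) witness to clause-(iii) witness re-establishes the invariant, and the remaining $d$-assignments of $S_{k+1}$ (which are just those already in $S_{k'}$) retain the sibling leaves they had, so (iv) persists. Instantiating the completed induction at $k=m$, where $t_m$ is a failure step, yields a full, well-formed $\mathbb{D}_A$-derivation of the negated input formula with $O(m)$ nodes, and the collection $A$ of its axiom-instance leaves is exactly as specified; this is the claim.
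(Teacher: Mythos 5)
Your proof is correct and follows essentially the same route as the paper's: an induction on the run showing that each proof step is a legal $\mathbb{D}_A$ inference and that the formula for the current state always labels a leaf of the tree built so far, with the linear size bound falling out of the constant number of nodes added per transition. The only difference is one of detail: where the paper dismisses backtracking with ``the same applies,'' you strengthen the invariant (your clause (iv), tracking a surviving sibling leaf for every pending $d$-assignment) and give the stack-discipline argument needed to see that the leaf $v^4$ created at the matching branching step is still available and correctly labeled --- a genuine gap in the paper's write-up that your version fills.
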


\begin{proof}
It is clear that each proof step, as represented by the subtree created in each case, is a valid natural deduction step in $\mathbb{D}_A$. We only need to show that the tree can be constructed. For this, we need to show that for each step $S_k\parallel c\stackrel{t}{\Longrightarrow} S_{k+1}\parallel c$, where $S_{k+1}$ is not $\emptyset$, it is always the case that $S_k\parallel c$ labels a leaf node in the tree constructed so far. When $k=0$, this is the case since $V_L^0 = \{v_0\}$. Now suppose $S_k\parallel c$ labels a leaf node. If $t$ is a pruning step, then $\forall \vec x(\vec x\in \vec I_2\rightarrow \neg c)$ labels $v^3_{k+1}$, which is added in $V_L^{k+1}$. The same applies to the other branching and backtracking. Finally, the step $S_{m-1}||c\Longrightarrow \emptyset||c$ corresponds to closing the last leaf labelled by $\forall \vec x(\vec x \in \vec I\rightarrow \neg c)$.
\end{proof}
Again, once the proof tree is constructed, the details of
the ICP algorithm no longer matters.  The only rules involved are the two
first-order rules in $\mathbb{D}_{A}$. Following relative soundness of the
system, to establish validity of the formula, now we only need to validate the
axiom set $A$.

\section{Validating the Proofs}\label{validate}

\subsection{Validating the Axioms}

There are two types of axioms that we allow in the proofs constructed from ICP runs: interval axioms and constraint axioms. To validate such axioms, we still need numerical computations. The difference is that the proof checker only needs to implement simple interval computation that can be validated through stand-alone arbitrary-precision or symbolic computation. Note that the validation of the axioms can fail when the solver correctly returns {\sf unsat}, if the solver uses complex numerical heuristics that can not be verified by reliable numerical computation. In practice, we ensure the correctness of the proof checker first, and use an abstraction refinement loop that allows the proof checker to ask for more detailed proofs from the solver.

The interval axioms do not contain any real functions, and are of the form
$\forall \vec x(x\in I_1\vee x\in I_2\rightarrow x\in I)$. We only need to
check that $I$ is a subset of $I_1\cup I_2$ by comparing the end points of the
intervals, which is an easy numerical task.

The constraint axioms are of the form $\forall x (\vec x\in \vec I \rightarrow
c(\vec x))$, and can only be verified by considering the functions that
occur in $c$. Although they are of the same form as the formulas we solve,
these axioms should contain evident properties of the functions involved,
usually on small intervals. Such facts can be verified using reliable interval
computations, for instance as follows.
\begin{definition}[Interval Extensions]
Let $f: \mathbb{R}^n\rightarrow \mathbb{R}$ be a real function. An interval
function $F: \mathbb{IR}^n \rightarrow \mathbb{IR}$ is a function that
satisfies:
$$\forall I\in \dom(F), \{f(x): x\in I\}\subseteq F(I).$$
\end{definition}
A simple example of interval extensions is the {\em natural interval
extension} for
arithmetic operations, based on computations of functions on the end points of
intervals. It is obvious that:
\begin{proposition}
Let $F$ be an interval extension of $f$, and $I\subseteq \dom(f)$. If
$F(I)\subseteq A$, then $\forall x (x\in I \rightarrow f(x)\in A)$.
\end{proposition}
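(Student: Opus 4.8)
The statement to prove is: if $F$ is an interval extension of $f$ and $I \subseteq \dom(f)$, then $F(I) \subseteq A$ implies $\forall x(x \in I \rightarrow f(x) \in A)$. This is almost immediate from the definition of interval extension, so the "proof" is really just a short unfolding of definitions. Let me sketch it.

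First I would fix an arbitrary $x$ with $x \in I$. Since $I \subseteq \dom(f)$, the value $f(x)$ is defined. By the defining property of an interval extension — namely $\{f(y) : y \in I\} \subseteq F(I)$ for every $I \in \dom(F)$ — we get $f(x) \in \{f(y): y \in I\} \subseteq F(I)$. Then by the hypothesis $F(I) \subseteq A$, transitivity of $\subseteq$ gives $f(x) \in A$. Since $x$ was arbitrary, $\forall x(x \in I \rightarrow f(x) \in A)$ holds. That is the whole argument.

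There is essentially no obstacle here: the only thing to be slightly careful about is the implicit assumption that $I \in \dom(F)$ so that the interval-extension property applies to this particular $I$ — in context this is understood, since we only invoke $F$ on intervals where it is defined (and $F$ being "an interval extension of $f$" with $I \subseteq \dom(f)$ presumes $F(I)$ is meaningful). So the plan is simply: unfold the definition of interval extension on the box $I$, chain the two inclusions $f(x) \in F(I) \subseteq A$, and conclude by generalization on $x$. One might also remark that this is the formal justification that lets the proof checker discharge a constraint axiom $\forall \vec{x}(\vec{x} \in \vec{I} \rightarrow c(\vec{x}))$ whenever a reliable interval evaluation of the relevant function on $\vec{I}$ lands in the half-line determined by the comparison $\sim 0$.
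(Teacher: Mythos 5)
Your proof is correct and is exactly the definition-unfolding argument the paper has in mind (the paper simply asserts the proposition as obvious and gives no explicit proof): chain $f(x)\in\{f(y):y\in I\}\subseteq F(I)\subseteq A$ and generalize over $x$. Your side remark about needing $I\in\dom(F)$ is a reasonable point of care that the paper likewise leaves implicit.
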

Thus, the axioms are validated if we can verify that they are consistent with
all the interval extensions.
\begin{example}
The second pruning step in Example~\ref{icp-example} generates an axiom
$$\forall
x \forall y ( x\in [1.7, 2] \wedge y\in [1, 2] \rightarrow \neg (y=
x^2)\vee \neg (y=x))$$
This can be easily validated through the natural interval extension of
$(y-x^2)$, which is $[1,2]-[1.7,2]^2 = [-3, -0.89]$ and does not contain $0$.
\end{example}

\subsection{Taylor Proofs}
Suppose we want to verify the inequality $f(x_1,...,x_n) > 0$ on a domain $\vec x\in D = I_1\times \cdots \times I_n.$ Using the multivariate mean value theorem, we have that for any $\vec a, \vec b\in D$
$$f(\vec b) - f(\vec a) = \nabla f(\xi)\cdot (\vec b - \vec a) = \sum_i \frac{\partial f}{\partial x_i}(\xi)\cdot (b_i-a_i)$$
for some $\xi\in D$. Thus, we can bound $f(x)$ on $D$ by computing the interval bound on the function
$$f(\vec a)+\sum_i \bigg(\sharp\bigg(\frac{\partial f}{\partial x_i}\bigg)(D)\bigg)\cdot D|_{x_i}$$
where $\sharp(\cdot)$ denotes interval extension, and $f(\vec a)$ is on the boundary of $D$.

\begin{example}
$f(x_1, x_2) = x_1^2 + x_2^2$ on domain $(x_1,x_2)\in [0,1]\times[0,1]$. We have $\partial f/\partial x_1 = 2x_1\in [0,2]$ and $\partial f/\partial x_2\in [0,2]$. Thus
$$f(\vec x)\in \sum_{i=1,2} [0,2]\cdot (1-0) + 0 = [0,4].$$
\end{example}

\subsection{The Branch and Prove Loop}

In practice, ICP usually implements complicated heuristics that are more
powerful than what can be verified through direct interval arithmetic. A practical approach first is to use an
abstraction refinement loop that allows the proof checker to ask the
solver for proof traces of the right amount of details. We sketch the procedures in Algorithm~\ref{b1} and Algorithm~\ref{b2}.

\begin{algorithm}
  \centering
  \caption{ProofCheck}
  \label{alg:proofcheck}
  \begin{algorithmic}[1]
    \Procedure{ProofCheck}{$p$, $\delta$}
        \If {$\Call{match}{p, \mathrm{Axiom}(\forall x (\vec x\in \vec I \rightarrow c(\vec x)))}$}
            \If {$\sharp c(\vec I)$} \Comment use interval arithmetic, taylor extension..
                \State \Return $\emptyset$
            \Else
                \State $(\vec I_1, \vec I_2) \gets \mathrm{Split}(\vec I)$
                \State $(\delta_1, \delta_2) \gets (\min( \delta, \frac{1}{4}||I_1||), \min( \delta, \frac{1}{4}||I_2||))$
                \State \Return $\{
                (\forall x (\vec x\in \vec I_1 \rightarrow c(\vec x)), \delta_1),
                (\forall x (\vec x\in \vec I_2 \rightarrow c(\vec x)), \delta_2)
                \}$
            \EndIf
        \ElsIf {$\Call{match}{p, \mathrm{Branch}(p_1, p_2, \vec I)}$}
            \State $U_1 \gets \Call{ProofCheck}{p_1, \delta}$
            \State $U_2 \gets \Call{ProofCheck}{p_2, \delta}$
            \If{$\vec I \not \subseteq (\dom(p_1) \cup \dom(p_2))$}
                \State \Return Error
            \Else
                \State \Return $U_1 \cup U_2$
            \EndIf
        \EndIf
    \EndProcedure
  \end{algorithmic}\label{b1}
\end{algorithm}

\begin{algorithm}
  \centering
  \caption{Branch-and-Prove}
  \label{alg:branch-and-prove}
  \begin{algorithmic}[1]
    \Procedure{Branch-and-Prove}{$p$, $\delta$}
        \State $U \gets \Call{ProofCheck}{p, \delta}$
        \If {$U \not = \emptyset$}
            \ForAll {$(a, \delta') \in U$}
                \State $p' \gets \Call{Solve}{a, \delta'}$
                \State \Call{Branch-and-Prove}{$p'$, $\delta'$}
            \EndFor
        \EndIf
    \EndProcedure
  \end{algorithmic}\label{b2}
\end{algorithm}

When we fail to prove an axiom through simple interval arithmetic, the
proof checker generates new subproblems that are returned to the
solver. At this stage, the axioms become the new theorems to be
proved. This is an abstraction refinement procedure.
Algorithm~\ref{alg:branch-and-prove} illustrates the procedure. By
executing the loop, we may obtain proof trees that contain more and more
detailed steps. There are two ways that the prover can generate the
subproblems, branching on a variable in the formula or using a smaller
$\delta$. Note that under the condition that the pruning operators in
the solver is well-defined, both procedures never change the {\sf
  unsat} result. The branching may give exponentially many new
problems; while the $\delta$-change does not give new problems, but
may exponentially slow down the solver in each round. In practice we
observe that such a refinement loop is very useful, as we will show in
the experiments.

\section{Experiments}\label{kepler}
We implemented the proof generation capacity into our open-source solver {\sf
  dReal}\footnote{\url{http://dreal.cs.cmu.edu} }. All the experiments below are performed on a machine of with a 32-core 2.0GHz Intel
Xeon E5-2600 Processor and 64GB of RAM. The benchmarks and full tables of experiment statistics are also available on the tool page. 
\begin{table*}[t!h!]
  \begin{center}
\begin{tabular}{|l||r|r|r||r|r|r|r|r|}
\hline
ID & \#Var & \#Arith & Nonlinear & Time$_{\text{S}}$ & Proof Size & \#Sub & \#Axiom & Time$_{\text{PC}}$ \\
\hline
\hline
461 & 6 & 36 & poly & 1.740 & 2145155 & 2 & 17442 & 203.886 \\
\hline
789 & 6 & 86 & atan2,sqrt & 1.640 & 350329 & 2 & 2464 & 128.077 \\
\hline
792 & 6 & 828 & atan2,sqrt & 0.400 & 19837 & 2 & 118 & 113.004 \\
\hline
745 & 6 & 36 & poly & 0.750 & 677580 & 2 & 5222 & 59.865 \\
\hline
785 & 6 & 80 & atan2,sqrt & 0.470 & 63388 & 2 & 526 & 26.450 \\
\hline
760 & 6 & 2767 & atan2,sqrt & 0.140 & 711 & 2 & 5 & 21.089 \\
\hline
820 & 6 & 95 & atan2,sqrt & 0.080 & 9134 & 2 & 54 & 14.703 \\
\hline
815 & 6 & 95 & atan2,sqrt & 0.330 & 41954 & 2 & 279 & 14.703 \\
\hline
814 & 6 & 95 & atan2,sqrt & 0.350 & 42102 & 2 & 278 & 14.703 \\
\hline
816 & 6 & 96 & atan2,sqrt & 0.110 & 12195 & 2 & 92 & 4.994 \\
\hline
817 & 6 & 96 & atan2,sqrt & 0.090 & 11792 & 2 & 93 & 4.993 \\
\hline
784 & 6 & 80 & atan2,sqrt & 0.060 & 7203 & 2 & 56 & 3.595 \\
\hline
781 & 6 & 86 & atan2,sqrt & 0.060 & 7481 & 2 & 45 & 2.657 \\
\hline
793 & 6 & 834 & atan2,sqrt & 0.020 & 18 & 1 & 1 & 1.855 \\
\hline
796 & 6 & 834 & atan2,sqrt & 0.010 & 18 & 1 & 1 & 1.710 \\
\hline
752 & 6 & 17 &  poly & 0.080 & 46360 & 2 & 277 & 1.709 \\
\hline
783 & 6 & 825 &atan2,sqrt & 0.020 & 93 & 1 & 1 & 1.549 \\
\hline
779 & 6 & 201 & atan2,sqrt & 0.010 & 10 & 1 & 1 & 0.705 \\
\hline
867 & 6 & 17 &  poly & 0.040 & 25820 & 2 & 147 & 0.683 \\
\hline
742 & 6 & 55 & acos,atan2,sqrt & 0.001 & 7 & 1 & 1 & 0.299 \\
\hline
508 & 6 & 53 & acos,sqrt & 0.001 & 8 & 1 & 1 & 0.286 \\
\hline
507 & 6 & 29 & acos,sqrt & 0.001 & 8 & 1 & 1 & 0.278 \\
\hline
744 & 6 & 24 & asin,cos,sin & 0.001 & 8 & 1 & 1 & 0.275\\
\hline
\end{tabular}
  \end{center}
  \caption{
    Experimental results (Proved instances):
    ID = Problem ID,
    \#Var = Number of variables,
    \#Arith = Number of arithmetic operators,
    Nonlinear = Nonlinear operators occurred in problem,
    Proof Size = Number of lines of the proof,
    $\mathrm{TIME_S}$ = Solving time in seconds,
    \#Sub = Number of subproblems generated by proof checking,
    \#Axiom = Number of proved axioms,
    $\mathrm{TIME_{PC}}$ = Proof-checking time in seconds.
  }\label{tbl:exp}
\end{table*}


\begin{table*}[t!h!]
  \begin{center}
\begin{tabular}{|l||r|r|r||r|r|r|r|r|}
\hline
ID & \#Var & \#Arith & Nonlinear & Time$_{\text{S}}$ & Proof Size & \#Sub \\
\hline
\hline
260.smt2 & 6 & 90 &  poly & 5.030 & 6281203 & 1 \\
\hline
866.smt2 & 6 & 38 & sqrt & 0.390 & 543061 & 21476 \\
\hline
775.smt2 & 6 & 2765 & atan2,sqrt & 4.040 & 130253 & 2 \\
\hline
764.smt2 & 6 & 2767 & atan2,sqrt & 1.700 & 49657 & 2 \\
\hline
762.smt2 & 6 & 2767 & atan2,sqrt & 2.040 & 42238 & 2 \\
\hline
484.smt2 & 6 & 1835 & acos,atan2,sqrt & 0.060 & 16 & 1 \\
\hline
485.smt2 & 6 & 1961 & acos,atan2,sqrt & 0.070 & 16 & 1 \\
\hline
498.smt2 & 6 & 573 & acos,matan,sqrt & 0.010 & 11 & 8191 \\
\hline
\end{tabular}
  \end{center}
  \caption{
    Experimental results (Unproved instances, Timeout = 300 sec):
    ID = Problem ID,
    \#Var = Number of variables,
    \#Arith = Number of arithmetic operators,
    Nonlinear = Nonlinear operators occurred in problem,
    Proof Size = Number of lines of the proof,
    $\mathrm{TIME_S}$ = Solving time in seconds,
    \#Sub = Number of subproblems generated by proof checking,
  }\label{tbl:exp}
\end{table*}


A main set of benchmarks that we studied is from the Flyspeck project~\cite{DBLP:conf/dagstuhl/Hales05,DBLP:conf/nfm/SolovyevH13}, which aims at a 
fully formalized proof of the Kepler conjecture. As lemmas for the
proof, hundreds of nonlinear real inequalities need to be verified.
Although the formulas usually contain only around ten variables, they
contain a huge number of nonlinear arithmetic operations and
trigonometric functions, and are mathematically challenging. In the original proof, Hales implemented procedures that combine
linear programming and interval arithmetic to establish all these
formulas, but the algorithms are formally verify. In
fact, the formal verification of these nonlinear inequalities is the last main piece
of work needed to complete the full project. 
Without any particular optimization on ICP, we have observed
promising results. Out of 916 nonlinear formulas in the Flyspeck
project repository, the solver returns {\sf unsat} for 107 of them with a timeout of 5 minute each, and a precision $\delta=10^{-3}$. Out of these formulas, we automatically generated and validated the proofs for 72 instances. The proof traces of these formulas can be very
large; for instance, we proved one with more than 2M lines in the
proof (54MB file). In Table~\ref{tbl:exp}, we list some of the
representative benchmarks to show scalability. Many of these formulas are highly nonlinear, for instance the formula encoded in 760.smt2 is following one 
\begin{multline*}
\forall\vec{x} \in [4.0, 6.3504]^5\; \Big(2\mathrm{arctan} (\frac{\Delta_2(\vec x)}{\sqrt{\Delta_1(\vec x) + \Delta_2(\vec x)^2} + \sqrt{\Delta_1(\vec x)}})\\
- 0.458(\sqrt{x_2} + \sqrt{x_3} +\sqrt{x_4} + \sqrt{x_5}) + 0.342\sqrt{x_1} + 3.319204\Big) < 0.0
\end{multline*}
where
\begin{eqnarray*}
  \Delta_1(\vec{x}) &=& 4x_1 (8x_1 (-x_1 + x_2 + x_3 + x_4 + x_5 - 8)\\
   & &+ x_2 x_5 (x_1 - x_2 + x_3 + x_4 - x_5+8\\
& &+ x_3x_4(x_1 + x_2 - x_3 - x_4 + x_5 + 8)+ 8 x_2 x_3 \\
& &- x_1 x_3 x_5 - x_1  x_2  x_4 - 8 x_4 x_5))\\
\Delta_2(\vec{x}) &=& x_2 x_5 -x_2 x_3 + x_3x_4 - x_4 x_5 +x_1^2 -x_1x_2\\
& & - x_1x_3 - x_1x_4 -x_1 x_5
\end{eqnarray*}

On the other hand, as mentioned above, we fail to establish about the proofs of unsatisfiability of about 30 instances. Table 2 shows some of these instances. They typically generate proofs that are large in size, or that the branch-and-prove loop has to generate too many sub-instances such that the proof checking can not terminate. 


\section{Conclusion}

We presented our approach for extracting formal proofs from a numerically-driven
decision procedure in the DPLL$\langle$ICP$\rangle$ framework. We formalized the ICP algorithm, and showed how to validate proof trees from the unsat answers.  A main focus for our tool is to prove
nonlinear lemmas in the Flyspeck project, and we have observed promising
experimental results. We believe the approach can be combined with existing
symbolic methods, and is a first step towards a framework that bridges the gap between symbolic and numerical approaches. Further work
would involve formalization of numerical algorithms, proof abstractions, local heuristics, and an implementation of our proof checker in standard proof assistants.

\bibliographystyle{abbrv}
\bibliography{ref_cade}

\end{document}